\newsavebox{\@brx}
\newcommand{\llangle}[1][]{\savebox{\@brx}{\(\m@th{#1\langle}\)}%
  \mathopen{\copy\@brx\kern-0.5\wd\@brx\usebox{\@brx}}}
\newcommand{\rrangle}[1][]{\savebox{\@brx}{\(\m@th{#1\rangle}\)}%
  \mathclose{\copy\@brx\kern-0.5\wd\@brx\usebox{\@brx}}}
\newtheorem{theorem}{Theorem}[section]
\newenvironment{proof}[1][Proof]{\begin{trivlist}
\item[\hskip \labelsep {\bfseries #1}]}{\end{trivlist}}
\DeclareMathOperator{\tr}{tr}
\DeclareMathOperator{\id}{\mathbb{I}}
\newcommand{\HH}{\mathcal{H}}
\newcommand{\ketbra}[2]{| {#1} \vphantom{#2} \rangle\langle {#2} \vphantom{#1} |} 
\newcommand{\proj}[1]{\ketbra{#1}{#1}}
\newcommand{\qed}{\nobreak \ifvmode \relax \else
      \ifdim\lastskip<1.5em \hskip-\lastskip
      \hskip1.5em plus0em minus0.5em \fi \nobreak
      \vrule height0.75em width0.5em depth0.25em\fi}
\begin{document}
\title{A no-go theorem for the persistent reality of Wigner's friend's perception}
\author{Philippe Allard Gu\'{e}rin}
\email{pallardguerin@perimeterinstitute.ca}
\affiliation{Faculty of Physics, University of Vienna, Boltzmanngasse 5, 1090 Vienna, Austria}
\affiliation{Institute for Quantum Optics and Quantum Information (IQOQI), Austrian Academy of Sciences, Boltzmanngasse 3, 1090 Vienna, Austria}
\affiliation{Perimeter Institute for Theoretical Physics, 31 Caroline St. N, Waterloo, Ontario, N2L 2Y5, Canada}

\author{Veronika Baumann}
\affiliation{Faculty of Physics, University of Vienna, Boltzmanngasse 5, 1090 Vienna, Austria}
\affiliation{Institute for Quantum Optics and Quantum Information (IQOQI), Austrian Academy of Sciences, Boltzmanngasse 3, 1090 Vienna, Austria}

\author{Flavio Del Santo}
\affiliation{Faculty of Physics, University of Vienna, Boltzmanngasse 5, 1090 Vienna, Austria}
\affiliation{Institute for Quantum Optics and Quantum Information (IQOQI), Austrian Academy of Sciences, Boltzmanngasse 3, 1090 Vienna, Austria}

\author{\v{C}aslav Brukner}
\affiliation{Faculty of Physics, University of Vienna, Boltzmanngasse 5, 1090 Vienna, Austria}
\affiliation{Institute for Quantum Optics and Quantum Information (IQOQI), Austrian Academy of Sciences, Boltzmanngasse 3, 1090 Vienna, Austria}

\date{\today}

\begin{abstract}

The notorious Wigner's friend thought experiment (and modifications thereof) has in recent years received renewed interest especially due to new arguments that force us to question some of the fundamental assumptions of quantum theory. In this paper, we formulate a no-go theorem for the persistent reality of Wigner's friend's perception, which allows us to conclude that the perceptions that the friend has of her own measurement outcomes at different times cannot "share the same reality'', if seemingly natural quantum mechanical assumptions are met. More formally, this means that, in a Wigner's friend scenario, there is no joint probability distribution for the friend's perceived measurement outcomes at two different times, that depends linearly on the initial state of the measured system and whose marginals reproduce the predictions of unitary quantum theory. This theorem entails that one must either (1) propose a nonlinear modification of the Born rule for two-time predictions, (2) sometimes prohibit the use of present information to predict the future --thereby reducing the predictive power of quantum theory-- or (3) deny that unitary quantum mechanics makes valid single-time predictions for all observers. We briefly discuss which of the theorem's assumptions are more likely to be dropped within various popular interpretations of quantum mechanics.

\end{abstract}

\maketitle



\section{Introduction}
\label{sec:intro}

One of the most puzzling scenarios one can encounter in quantum physics is the so-called Wigner's friend thought experiment~\cite{Wigner1995, Deutsch1985}. 
It allows to investigate the applicability of the quantum formalism beyond its usual limits, by considering a certain physical system --called ``the friend''-- simultaneously as a quantum system and as a user of quantum theory (observer). In this thought experiment, a superobserver (Wigner) describes, using the pure quantum formalism, his friend who is performing a quantum measurement on a spin system. After the friend's measurement has taken place, we are in a counter-intuitive situation where Wigner describes the friend in a quantum superposition of observing two different outcomes, while from the friend's perspective a definite outcome must be perceived.

There has been a number of recent works that cast new light onto this thought experiment \cite{Brukner2017, Brukner2018, relano2018decoherence, Baumann2018, healey2018quantum, nurgalieva2018inadequacy, Bong2020, sudbery2019hidden, Baumann2019, Baumann2019generalized, baumann2020wigner, bub2020two, Cavalcanti2020, Proietti2019, Baumann2019, Zukowski2020}, many of which originated as reactions to a paper by Frauchiger and Renner~\cite{Frauchiger2018}. The latter work can be regarded as showing that, in quantum mechanics, it may be problematic to treat observational knowledge of other agents as if it were one's own, and to logically compare such indirect knowledge with that gained through direct observation. In the words of these authors, in a scenario where ``multiple agents have access to different pieces of information, and draw conclusions by reasoning about the information held by others'', it can be shown that, ``in the general context of quantum theory, the rules for such nested reasoning may be ambiguous''~\cite{Frauchiger2018}; a conclusion that is reminescent of the QBist interpretation of quantum theory~\cite{Fuchs2010, DeBrota2020}. Other works~\cite{Brukner2017, Brukner2018, Bong2020, Cavalcanti2020}, show that a no-go result can be obtained when the assumption that superobservers can treat other observer's outcomes as "facts of the worlds" is combined with a locality assumption. What all of the above-cited works have in common is that they reach their no-go results by combining the observations of multiple observers as if those all belonged to the same "classical reality".

 In the present paper, we put forward a \textit{no-go theorem for the persistent reality of Wigner's friend's perception} that has perhaps more counter-intuitive and drastic consequences: Even a single observer, when making predictions about his or her observations at two different times, can conflict with the linear dependence of quantum mechanical probabilities on the density operator of the system being measured. This will occur if said observer is subjected to a measurement by a superobserver between these two times and uses unitary quantum mechanics (i.e. "no-collapse" quantum mechanics) to make his or her predictions. Our result is in line with a different understanding of the Frauchiger-Renner argument, where it is taken to primarily show that inferences made on the basis of a quantum state assigned at a certain time are not necessarily valid at later times, especially not if ``someone Hadamards your brain'' in between~\cite{Aaronson_blog}.

Indeed, one conclusion that can be drawn from this no-go theorem is that treating a piece of information from the past as if it was still presently existing (even when one takes into account a possible subjective uncertainty) cannot in general be upheld together with the conjunction of the above seemingly natural assumptions within the domain of quantum theory. We will show that in a particular instance of the Wigner's friend experiment, our assumptions imply that, even in cases where the theory states that no change would take place in the quantum state of the friend's laboratory, the perceptions of the friend have a non-zero probability to change from before to after Wigner's measurement. Finally, we will briefly discuss how different interpretations of quantum mechanics might comply with the no-go theorem, by identifying which of the assumptions are most likely to be abandoned within these interpretations.


 \section{Wigner's friend thought experiment}
\label{sec:gedanken}

Let us begin by reviewing the Wigner's friend thought experiment within a unitary formalism to set up some basic notation, and to clarify what we mean by \textit{unitary quantum mechanics} for the particular scenario that we consider. It is common to assume that the following description provided by unitary quantum mechanics is empirically adequate in all situations: any observer which uses this formalism will predict probabilities that match the relative frequencies that would be observed if the experiment was repeated many times.

The experiment features an observer, the friend (F), performing measurements on a qubit (e.g., a spin-1/2 particle), the system (S), in a sealed laboratory. The system is initialized in the state $|\psi \rangle_S = \alpha \ket{\uparrow}_S + \beta \ket \downarrow_S$, where $\alpha$ and $\beta$ are complex numbers that obey $|\alpha|^2 + |\beta|^2 = 1$, and the possible outcomes of the measurement are recorded by the friend as  $U$ or $D$, respectively standing for "up" and "down". 
Another (super-)observer, Wigner (W), located outside the laboratory, performs a measurement on both the system and the friend. The initial state of the friend (which encompasses any other possible degree of freedom in the isolated lab) is known by Wigner, and is initially in a macroscopic "ready" state $|0 \rangle_F$. 
The state of Wigner himself is also in a macroscopic ``ready'' state $|0 \rangle_W$.

The initial state (at time $t_0$) of the whole setup is therefore
\begin{equation}
\label{eq:psi_t0}
|\Psi (t_0) \rangle = |\psi \rangle_S  |0 \rangle_F |0 \rangle_W = \left(\alpha \ket \uparrow_S + \beta \ket \downarrow_S \right) |0 \rangle_F |0 \rangle_W.
\end{equation}
At time $t_1$, the friend measures the spin in the z-basis, and the state becomes
\begin{align}
\label{eq:psi_t1}
|\Psi (t_1) \rangle &=\left( \alpha \ket \uparrow_S |U \rangle_F  +  \beta \ket \downarrow_S |D \rangle_F\right) |0 \rangle_W ,
\end{align}
where the states $\ket{U}_F, \ket{D}_F$ correspond to the friend having observed outcome ``up'' or ``down'' respectively. Later, at time $t_W$, Wigner measures the friend and system in some entangled basis, with binary outcomes~\footnote{Strictly speaking, there should be two other outcomes corresponding to the rank-2 projector $|\uparrow, D\rangle \langle \uparrow, D| + |\downarrow, U\rangle \langle \downarrow, U|$, but these outcomes are never actualized in the experiment.} corresponding to the orthogonal states
\begin{align}
&|1\rangle_{SF} = a \ket{\uparrow, U} + b \ket{\downarrow, D}\nonumber \\
&|2 \rangle_{SF} = b^* \ket{\uparrow, U}- a^*  \ket{\downarrow , D}\nonumber,
\end{align}
with $a, b$ being complex numbers obeying $|a|^2 +|b|^2 = 1$. At a slightly later time $t_2 > t_W$, the measurement is over and we have the final state
\begin{align}
|\Psi(t_2)\rangle = &(\alpha a^* + \beta b^*) |1\rangle_{SF} |1\rangle_W +  (\alpha  b - \beta a) |2\rangle_{SF} | 2 \rangle_W \label{eq:psi_t2}  \nonumber \\
=& a (\alpha a^* + \beta b^*) \ket \uparrow_S |U \rangle_F |1\rangle_W  \nonumber \\
&+  b (\alpha a^* + \beta b^*) \ket \downarrow_S |D \rangle_F |1\rangle_W  \\
&+ b^* (\alpha  b - \beta a) \ket \uparrow_S |U \rangle_F | 2 \rangle_W  \nonumber \\
& -  a^* (\alpha  b - \beta a) \ket \downarrow_S |D \rangle_F | 2 \rangle_W \nonumber,
\end{align}
where $|1\rangle_W$ and $|2\rangle_W$ are pure quantum states corresponding to Wigner seeing the outcome ``$1$'' or ``$2$'' respectively. Note that the state $|\Psi(t_2)\rangle$ depends on the specific unitary realization of Wigner's measurement; different purifications can lead to different states $|\Psi(t_2)\rangle$.

Using the states in Eqs.~\eqref{eq:psi_t0}-\eqref{eq:psi_t2} and the Born rule, one can find the expected statistics for any of the friend's or Wigner's measurement outcomes using unitary quantum mechanics. This is achieved  by applying a projector $\Pi_x$ onto the state where the respective observer is seeing outcome $x$ to the state at the time of interest, i.e.
\begin{equation}
p(x) = \tr\left( \Pi_x \proj{\Psi(t)} \right),
\label{eq:p_onetime}
\end{equation}
where, in the case relevant for this work where there are two outcomes $U$ and $D$, the probability of the friend seeing for example outcome $x = U$ is obtained with $\Pi_U = |U \rangle \langle U|_F$.

The states in Eqs.~\eqref{eq:psi_t0}-\eqref{eq:psi_t2} represent the unitary evolution of the full quantum state at all times. While the latter is commonly associated with the many-worlds interpretation~\cite{Everett1957,wallace2012emergent}, or with Bohmian mechanics~\cite{Bohm1952_1, Bohm1952_2, durr2009bohmian}, it is also compatible with a timeless formulation of quantum theory as introduced by Page and Wootters~\cite{pageEvolutionEvolutionDynamics1983}. Even without necessarily accepting the picture of the world provided by the many-worlds interpretation, Eq.~\eqref{eq:p_onetime} can be used by any observer to make predictions about their observered outcome at some time. In particular, we assume that the friend has enough information about the experimental setup in order to use Eq.~\eqref{eq:p_onetime} for her probability assignments.

In the following, we will also be interested in cases where the initial state of the system is a mixed state $\rho_S$. Such a state can be decomposed as $\rho = \lambda \proj{\psi} + (1 - \lambda) \proj{\phi}$, where $|\psi\rangle, |\phi\rangle$ are orthonormal states and $0 \leq \lambda \leq 1$. Then we have the analogue of expressions~\eqref{eq:psi_t0}-\eqref{eq:psi_t2} for the mixed state $\Sigma(t)$ of the whole setup at different times
\begin{equation}
\Sigma(t) = \lambda \proj{\Psi(t)} + (1 - \lambda) \proj{\Phi(t)},
\label{eq:Sigma_t}
\end{equation}
where $|\Psi(t)\rangle$ and $\ket{\Phi(t)}$ are states analogous to Eqs.~\eqref{eq:psi_t0}-\eqref{eq:psi_t2} with initial system states $\ket{\psi}_S$ and $\ket{\phi}_S$ respectively. Furthermore, probabilities are now given by $p(x) = \tr\left( \Pi_x \Sigma(t) \right)$.

In the standard analysis of the Wigner's friend thought experiment, the friend is usually assumed to describe the dynamics of her lab by using the state-update rule instead of Eq.~\eqref{eq:psi_t1}. She, therefore, would assign probabilities to Wigner's measurement that are different from those assigned by Wigner based on Eqs.~\eqref{eq:psi_t1} and~\eqref{eq:psi_t2}, which leads to an inconsistency between the predictions of both observers (see for example Ref.~\cite{Baumann2018}).

\section{Probability assignments in a scientific theory}
\label{sec:probs}

A necessary requirement for an empirically adequate scientific theory is that it should be able to give (quantitative) predictions, such as to have testable empirical content.\footnote{Throughout the paper, by the term \textit{prediction} we merely mean the possibility of assigning a probability distribution, and we do not strictly commit to any particular (operational) interpretation of probability, such as in terms of betting quotients.}  Namely, a theory should be able to associate a measure of likelihood to an event $y$ to happen, given that certain conditions --that in turn are captured by another event $x$-- have already occurred. In the words of Wigner,
\begin{quote}
"One realises that \textit{all} the information which the laws of physics provide consists of probability connections between subsequent impressions that a system makes on one if one interacts with it repeatedly, i.e., if one makes repeated measurements on it."~\cite{Wigner1995}
\end{quote}

The theory should thus be able to answer questions of the form: ``given that I observed event $x$ at time $t_1$, how likely is it that I will observe event $y$ at a later time $t_2$?''. Mathematically speaking, such a question is answered by specifying a conditional probability distributions $p(y|x)$.~\footnote{In situations where a user of the theory does not have enough information to uniquely determine $p(y|x)$ (for example some other agent could intervene in an unknown way), the theory should be able to provide a list of all the information which, if it were known, would determine $p(y|x)$. For lack of a better alternative, the user of the theory can subjectively assign their best guess for a probability distribution over these unknown variables, which in turn allows to compute $p(y|x)$.} On that note, a recent work about the emergence of physical laws is based on the idea that the primary purpose of such laws is to give the conditional probability distributions relating events perceived by an observer at two subsequent times~\cite{Mueller2020}.

In the context of the Wigner's friend thought experiment, we are thus interested in the friend's question: "given that I saw outcome $f_1$ at time $t_1$, what is the probability (attributed by using quantum theory) that I will see outcome $f_2$ at time $t_2$?". We assume that quantum mechanics is empirically adequate and is able to answer this question by providing a conditional probability distribution $p(f_2|f_1)$. Moreover, note that (unitary) quantum theory also prescribes how to assign a probability for observations at a single time (\emph{one-time probabilities}) $p(f_1)$ by  Eq.~\eqref{eq:p_onetime}.~\footnote{It is worth stressing that one-time probabilities are fundamentally also themselves conditional probabilities, namely conditioned on all the possible past events that can influence the probability of the event that we are trying to predict.} Given these elements, the standard axiomatization of probability theory allows the definition of a joint probability distribution through the identity $p(f_1, f_2)=p(f_1|f_2)p(f_2)$. Thus, any theory that, like quantum mechanics, prescribes rules to assign one-time probabilities and conditional probabilities, automatically allows to assign joint probability distributions. However, as we will see, such a joint probability distribution cannot simultaneously fulfill three seemingly natural assumptions in a Wigner's friend scenario.

\section{No-go theorem for the persistent reality of Wigner's friend's perception}

We now formulate a formal no-go theorem which shows that in Wigner's friend scenarios, the friend cannot treat her perceived measurement outcome as having \textit{reality across multiple times} without contradicting what might appear to be core assumptions of quantum mechanics. Consider the following assumptions:

\begin{enumerate}
\item [P1] The events $f_1$ and $f_2$, corresponding to the perceived measurement records of the friend at times $t_1$ and $t_2$, respectively, can be combined into a joint event to which is assigned a probability distribution $p(f_1, f_2)$. Moreover, the rules of the probability calculus imply that $p(f_1) = \sum_{f_2} p(f_1, f_2)$ and $p(f_2)= \sum_{f_1} p(f_1, f_2)$.
\item[P2] One-time probabilities are assigned without resorting to the state-update rule (i.e., using unitary quantum theory, where no ``collapse'' is considered to occur). Thus, when the initial state of the qubit is $\ket{\psi}_S$,
\begin{equation}
p(f_i) = \tr ( |f_i \rangle \langle f_i |_F |\Psi(t_i) \rangle \langle \Psi(t_i)|),
\label{eq:p_fi}
\end{equation}
with $|\Psi(t_i) \rangle$ being the unitarily evolved global state according to Eqs.~\eqref{eq:psi_t1},~\eqref{eq:psi_t2}.
\item[P3] The joint probability of the friend's perceived outcomes $p(f_1, f_2)$ has a convex linear dependence on the initial state $\rho_S$ of the system qubit.
\end{enumerate}

We will motivate these assumptions in more detail in Section~\ref{sec:assumptions}. We now show that these assumptions lead to a contradiction when applied to the friend in a Wigner's friend scenario.
\begin{theorem}
\label{thm:no_joint}
The conjunction of the assumptions P1-P3 cannot be satisfied for the Wigner's friend thought experiment for a general choice of Wigner's measurement basis.
\end{theorem}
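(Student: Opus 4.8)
The plan is to derive a contradiction by computing $p(f_1,f_2)$ two different ways—once via its linear dependence on $\rho_S$ (P3), pinned down by the one-time marginals (P1) evaluated on a basis of input states (P2), and once by testing it against a judiciously chosen mixed input state—showing the two are incompatible for generic $a,b$.

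First I would exploit P3. A convex-linear function of $\rho_S$ on the qubit extends to a linear functional, so $p(f_1,f_2)$ is determined by its values on any operator basis, e.g.\ on the four states $\proj{\uparrow}$, $\proj{\downarrow}$, $\proj{+}$, $\proj{+i}$ (with $\ket{\pm}=(\ket{\uparrow}\pm\ket{\downarrow})/\sqrt2$, $\ket{+i}=(\ket{\uparrow}+i\ket{\downarrow})/\sqrt2$). Crucially, for the three product input states $\proj{\uparrow}$ and $\proj{\downarrow}$ the unitary evolution to $t_1$ produces no coherence between $\ket{U}_F$ and $\ket{D}_F$, and I claim the same holds at $t_2$ in the relevant sense, so on these inputs $p(f_1,f_2)$ is essentially forced: with input $\proj{\uparrow}$ the friend deterministically sees $U$ at $t_1$, hence $p(f_1=D,\cdot)=0$, and the $t_2$-marginal from P2 applied to $\ket{\Psi(t_2)}$ with $\alpha=1,\beta=0$ gives $p(f_2=U)=|a|^2$, $p(f_2=D)=|b|^2$; by P1 this fixes $p(U,U)=|a|^2$, $p(U,D)=|b|^2$ for that input. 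Symmetrically for $\proj{\downarrow}$. This already determines $p(f_1,f_2)$ on the diagonal part of $\rho_S$; writing $\rho_S=\begin{pmatrix}|\alpha|^2 & \alpha\beta^*\\ \alpha^*\beta & |\beta|^2\end{pmatrix}$, linearity gives $p(f_1,f_2)$ as the diagonal-determined piece plus a term linear in $\mathrm{Re}(\alpha\beta^*)$ and $\mathrm{Im}(\alpha\beta^*)$ whose coefficients I would extract from the $\ket{+}$ and $\ket{+i}$ data.

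Next I would impose the marginal constraints from P1 together with the $t_1$- and $t_2$-marginals computed from P2 for a \emph{general} pure input $\ket{\psi}_S=\alpha\ket{\uparrow}+\beta\ket{\downarrow}$. The $t_1$-marginal is simply $p(f_1=U)=|\alpha|^2$, $p(f_1=D)=|\beta|^2$ (no dependence on the off-diagonal terms). The $t_2$-marginal, read off from Eq.~\eqref{eq:psi_t2}, is $p(f_2=U)=|a(\alpha a^*+\beta b^*)|^2+|b^*(\alpha b-\beta a)|^2$ and similarly for $D$—and this one genuinely depends on the coherences of $\rho_S$ through the cross terms $\propto\mathrm{Re}((\alpha a^*+\beta b^*)\overline{(\alpha b-\beta a)})$ when $ab\neq 0$. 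The contradiction will emerge because the diagonal inputs already forced the ``no-coherence'' entries of $p(f_1,f_2)$, leaving only a one-parameter family of corrections to satisfy a two-parameter ($\mathrm{Re}$ and $\mathrm{Im}$ of $\alpha\beta^*$) worth of marginal constraints that are themselves incompatible with nonnegativity. Concretely, I would find that matching the $t_2$-marginal forces some entry of $p(f_1,f_2)$ to be negative (or exceed $1$) for a suitable choice of phase of $\alpha\beta^*$—most cleanly by testing the mixed state $\rho_S=\tfrac12\id$, for which P2 predicts a specific $p(f_2=U)$ that differs from $\tfrac12$ whenever $\mathrm{Re}(ab^*)\neq \ldots$, while P3 (convex-linearity) forces $p(f_1,f_2)$ on $\tfrac12\id$ to be the average of its values on $\proj{\uparrow}$ and $\proj{\downarrow}$, giving $p(f_2=U)=\tfrac12(|a|^2+|b|^2)=\tfrac12$; the mismatch is the contradiction, and it is nonzero precisely when Wigner's basis is ``generic'' (i.e.\ $a,b\neq 0$ and the interference term survives).

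The main obstacle I anticipate is handling the $t_2$-marginal carefully: one must decide whether $\ket{\Psi(t_2)}$ is taken with a fixed purification of Wigner's measurement (the paper notes the state depends on it), and one must verify that the interference cross-term in $p(f_2=U)$ does not vanish identically after summing the four branches of Eq.~\eqref{eq:psi_t2}—a short but essential computation. A secondary subtlety is justifying that the three inputs $\proj{\uparrow},\proj{\downarrow},\tfrac12\id$ suffice: since $\tfrac12\id=\tfrac12\proj{\uparrow}+\tfrac12\proj{\downarrow}=\tfrac12\proj{+}+\tfrac12\proj{-}$, convex-linearity (P3) alone—without invoking the full affine extension—already pins $p(f_1,f_2)$ on $\tfrac12\id$ from the diagonal inputs, so the argument needs \emph{only} P3 in its stated convex form, which is worth remarking on. Once the cross-term is shown to be generically nonzero, the contradiction is immediate and the proof is complete.
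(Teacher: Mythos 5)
Your proposal contains both a computational slip and, more importantly, a structural gap in the mechanism that is supposed to produce the contradiction. The slip: for input $\proj{\uparrow}$ the unitary $t_2$-marginal is not $p(f_2=U)=|a|^2$, $p(f_2=D)=|b|^2$; reading off Eq.~\eqref{eq:psi_t2} with $\alpha=1,\beta=0$ gives $p(f_2=U)=|a|^4+|b|^4$ and $p(f_2=D)=2|a|^2|b|^2$ (and symmetrically for $\proj{\downarrow}$). The gap: your headline test on $\rho_S=\tfrac12\id$ provably cannot yield a mismatch. The P2 single-time probabilities are themselves linear in $\rho$ --- indeed $p(f_2)=\tr(E^2_{f_2}\rho)$ with $E^2_U=|a|^2\proj{\phi_1}+|b|^2\proj{\phi_2}$, as the paper computes --- so the direct P2 prediction on any mixture automatically coincides with the convex average of its values on the components; both sides give exactly $\tfrac12$ here (using the corrected numbers, $\tfrac12[(|a|^4+|b|^4)+2|a|^2|b|^2]=\tfrac12$). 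No contradiction can ever arise by comparing single-time marginals across decompositions of a mixed state: the conflict in the theorem is intrinsically about the existence of a \emph{joint} distribution compatible with \emph{both} marginals.

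The missing idea is how the $t_1$-marginal constraint, combined with positivity, kills the freedom you correctly note remains in the coherence-dependent part of $p(f_1,f_2)$. By P1 and P3, $p(f_1,f_2)=\tr(G_{f_1f_2}\,\rho)$ for positive operators $G_{f_1f_2}$. Since the $t_1$-marginals are the rank-one projectors $E^1_U=\proj{\uparrow}$, $E^1_D=\proj{\downarrow}$, positivity forces $G_{Uf_2}\propto\proj{\uparrow}$ and $G_{Df_2}\propto\proj{\downarrow}$, i.e.\ every $G_{f_1f_2}$ is diagonal in the $z$-basis; but then $\sum_{f_1}G_{f_1U}$ is diagonal, while the required $t_2$-marginal $E^2_U$ has off-diagonal element $\langle\uparrow|E^2_U|\downarrow\rangle=ab^*(|a|^2-|b|^2)$, nonzero for generic $a,b$. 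This is precisely the sharpness-implies-commutativity (joint-measurability) step the paper takes via Ref.~\cite{Heinosaari2008}, and it is what your sketch lacks: your remark that ``a suitable phase of $\alpha\beta^*$ forces negativity'' gestures at the right phenomenon, but without showing that the undetermined coherence-dependent coefficients cannot absorb the mismatch while keeping all probabilities in $[0,1]$, the argument does not close. If you replace the $\tfrac12\id$ test by this operator-positivity argument (or by the equivalent commutator computation $[E^1_U,E^2_U]\neq0$), your overall linear-extension framework does lead to the paper's conclusion.
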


\begin{proof}
Define the isometries $V_i : \HH_S \to \HH_S \otimes \HH_F \otimes \HH_W$, $i = 1,2$ mapping the initial state of the spin $|\psi\rangle_S$ to the corresponding state at time $t_i$ as $V_i |\psi\rangle_S = |\Psi(t_i)\rangle_{SFW}$. Using Eqs.~\eqref{eq:psi_t1} and~\eqref{eq:psi_t2}, these are found to be
\begin{align}
V_1 &= |\uparrow, U, 0 \rangle_{SFW} \bra \uparrow_S + |\downarrow, D, 0 \rangle_{SFW} \bra \downarrow_S \\
V_2 &= |1 \rangle_{SF}|1\rangle_W \langle \phi_1|_S + |2 \rangle_{SF}|2\rangle_W \langle \phi_2|_S
\end{align}
where $|\phi_1\rangle := a \ket \uparrow+ b \ket \downarrow $ and $|\phi_2\rangle := b^* \ket \uparrow - a^* \ket \downarrow $.

By P2 (and using P3 to extend to mixed states) we have
\begin{align}
p(f_1)& = \tr \left( (|f_1 \rangle \langle f_1|_F \otimes \id_{SW}) V_1 \rho V_1^\dagger \right) \\
&= \tr \left( V_1^\dagger (|f_1 \rangle \langle f_1|_F \otimes \id_{SW}) V_1 \rho \right) = \tr( E^1_{f_1} \rho) \label{eq:p_f1_V},
\end{align}
where we define $E^1_{f_1} :=  V_1^\dagger (|f_1 \rangle \langle f_1|_F \otimes \id_{SW}) V_1$, which can be understood as the "Heisenberg picture" operator~\footnote{This is not exactly the textbook Heisenberg picture, because $V_1$ is an isometry and not a unitary.} corresponding to measuring $|f_1 \rangle \langle f_1|$ at time $t_1$. It is easily checked that $E^1_{f_1}$ is a positive operator on $\HH_S$ and that $\sum_{f_1} E^1_{f_1} = \id_S$. Therefore $\{ E^1_{f_1} \}$ is a positive operator-valued measure (POVM). Similarly, we have
\begin{equation} 
p(f_2) = \tr( V_2^\dagger (|f_2 \rangle \langle f_2|_F \otimes \id_{SW}) V_2 \rho) := \tr(E_{f_2}^2 \rho). \label{eq:p_f2_V}
\end{equation}
The calculation of the POVM elements yields
\begin{align}
E^1_{U} &= \ket \uparrow \bra \uparrow \label{E1U} \\
E^1_{D} &= \ket \downarrow \bra \downarrow \label{E1D} 
\end{align}
and
\begin{align}
E^2_{U} &= |a|^2 |\phi_1 \rangle \langle \phi_1| + |b|^2  |\phi_2 \rangle \langle \phi_2| \label{eq:E^2_U} \\
E^2_D &= |b|^2 |\phi_1 \rangle \langle \phi_1| + |a|^2  |\phi_2 \rangle \langle \phi_2|. \label{eq:E^2_D}
\end{align}

Assumptions P1 and P3 imply that there exists a joint POVM $\{G_{f_1 f_2} \}$ such that
\begin{equation}
p(f_1, f_2) = \tr ( G_{f_1 f_2} \rho),
\end{equation}
and requiring that the marginals obey P2 for all states means that $\sum_{f_1} G_{f_1 f_2} = E^1_{f_1}$ and $\sum_{f_2} G_{f_1 f_2} = E^2_{f_2}$. When there exists such a $\{ G_{f_1 f_2} \}$, the POVMs $\{E^1_{f_1} \}$ and $\{E^2_{f_2} \}$ are called jointly measurable.

If (at least) one of the two POVM's considered is sharp, then joint measureability is equivalent to commutativity, and there is a unique joint observable $G_{f_1 f_2} = E_{f_1}^1 E_{f_2}^2$ with the correct marginals (Proposition 8 of Ref.~\cite{Heinosaari2008}). Since we are considering two-outcome POVMs, and since $E^1$ given by Eqs.~\eqref{E1U} and \eqref{E1D}, is sharp, joint measurability is equivalent to $[E_U^1, E_U^2] = 0$. Direct calculation yields
\begin{align}
[E^1_U, E^2_U] &= (|a|^2 - |b|^2) a b^* \ket \uparrow \bra \downarrow \nonumber \\
&+ (|b|^2 - |a|^2) a^* b \ket \downarrow\bra \uparrow. \label{eq:commutator}
\end{align}
 So these two POVMs are not jointly measurable for general choices of $a,b$, which concludes the proof. \qed
\end{proof}

In the following we make some conceptual remarks about the proof of the above theorem. Even though our proof uses the language of joint measurability, due to a formal equivalence with that problem, the physical interpretation of joint measurability is different in our scenario. Indeed, joint measurability usually refers to the possibility to "simultaneously" measure two POVMs via a third joint POVM. Two non-jointly measureable POVMs can nevertheless be measured sequentially, one after the other, but the first measurement will generally disturb the state which is the input to the second measurement. In our Wigner's friend scenario, we are likewise considering measurement operators that correspond to observed outcomes at two subsequent times. However, assumption P2 implies Eq.~\eqref{eq:p_f2_V}, where $\rho$ is not affected by the measurement at $t_1$. Thus imposing P2 leads to a bypassing of the standard information-disturbance relations~\cite{Busch2009} and to a contradiction with assumptions P1 and P3.

Furthermore, note that it is not essential for Wigner to perform any measurement in order to derive a no-go result. Indeed, Wigner could instead perform a  "Hadamard" unitary $|\uparrow, U\rangle \mapsto \frac{1}{\sqrt{2}} (|\uparrow, U\rangle + |\downarrow, D\rangle)$,  $|\downarrow, D\rangle \mapsto \frac{1}{\sqrt{2}} (|\uparrow, U\rangle - |\downarrow, D\rangle)$, and the same theorem would follow after making the necessary modifications for the state $|\Psi(t_2)\rangle$.

\subsection{Motivation of the assumptions}
\label{sec:assumptions}
We attempt here to motivate each of the assumptions of our no-go theorem. We can only offer plausibility arguments, since, as we have already shown, these assumptions cannot in general all hold true in quantum mechanics. It should also be noted that the assumptions are not logically independent: for example one cannot hold P3 without at the same time assuming P1.

As discussed in Secs.~\ref{sec:gedanken} and \ref{sec:probs}, P1 is motivatived by the requirement that quantum theory --as any other predictive theory-- should provide us with conditional probability distributions for the friend's perceptions before and after Wigner's measurement, i.e. $p(f_2|f_1)$, and P2 provides single time probabilities for the friend's perception: $p(f_1)= \tr\left( \proj{f_1}_F \proj{\Psi(t_1)} \right)$ and $p(f_2)= \tr\left( \proj{f_2}_F \proj{\Psi(t_2)} \right)$. Thus we should in principle be able to construct a joint probability distribution for the friend's perceived outcomes at two different times, $p(f_1,f_2)$. Even if one initially only assigns probabilities to events directly perceived by an observer, such as $f_1$ and $f_2$, the requirement of predictability for a theory leads us to assign probabilities to the joint event $(f_1, f_2)$, although this is not a directly perceivable event in its own (arguably, one cannot have direct perceptions about two different times). 

Assumption P1 can also be understood as a special case of the general assumption that measurement records are \textit{facts of the world}~\cite{Brukner2017}, or of the \textit{absoluteness of observed events} (AOE) -- the assumption that "an observed event is a real single event, and not relative to anything or anyone"~\cite{Bong2020, Cavalcanti2020} -- applied to events $f_1$ and $f_2$. It is important to emphasize that the negation of AOE is not necessarily the claim that measurements outcomes are observer-dependent. Indeed the observed events in assumption P1 are all associated with the same observer, and thus P1 is conceptually different from the version of AOE used in deriving the no-go theorem of Ref.~\cite{Bong2020}, which is about joint probability assignments for the measurement outcomes of multiple observers.

P2 can be justified by appealing to the belief that interpretations of quantum mechanics should be empirically equivalent, i.e., that they all yield the same experimental predictions. Since P2 definitely holds in certain interpretations, notably in the Everett interpretation~\cite{Everett1957}, one should then expect P2 to hold in general. Since P2 can in principle be tested empirically, it is appropriate to regard quantum mechanics with objective collapse~\cite{Bassi2013} as a different physical theory from unitary quantum mechanics, and not merely a different interpretation~\cite{Baumann2018}. 

Assumption P3 can be understood as a conservative extension of the Born rule --which assigns single-time probabilities linearly in the quantum state-- to joint events at multiple times: P3 asks that the joint probabilities for events at multiple times must also depend linearly in the initial quantum state. P3 is true in typical laboratory situations where the usual quantum mechanical state-update rule can be used to calculate probabilities. Moreover, P3 can be motivated operationally, in a way that is customary in the context of generalized probabilistic theories~\cite{Hardy2001, Barrett2007}. We can imagine that a third agent is preparing the initial state of the system qubit, independently from the friend and Wigner. One might assume that, after fully specifying all relevant details for the friend's and Wigner's measurement setups (this includes the measurement basis for both of them, the initial quantum state of the friend, etc.), the probabilities $p(f_1, f_2)$ only depend on the quantum state $\rho_S$ but not on the way that the state was prepared. Suppose that $p_\sigma(f_1, f_2)$ and $p_\tau(f_1, f_2)$ are the probability distributions when the system state $\sigma$ or $\tau$ is prepared. Since $\rho = \lambda \sigma + (1 - \lambda) \tau $ can be prepared by tossing a biased coin which leads to prepare $\sigma$ with probability $\lambda$, and $\tau$ otherwise, the linearity of probability implies that $p_\rho (f_1, f_2) = \lambda p_\sigma(f_1, f_2) + (1 - \lambda) p_\tau (f_1, f_2)$.~\footnote{A further, independent justification for P3, i.e. allowing probabilistic mixtures, is that it implies that optimal compression is equivalent to linear compression~\cite{Hardy2009_foliable, Hardy2011}} Roughly speaking, upholding P1 while denying P3 amounts to the claim that quantum mechanics is "incomplete", in the sense that a full specification of the initial state $\rho_S$ is not sufficient for computing $p(f_1,f_2)$. Furthermore, a convincing case against P3 should involve the prescription and justification of a specific non-linear two-time probability rule; Bohmian mechanics is an example of this strategy, as we discuss further in Section~\ref{sec:interpretations}.

\section{Implications of the no-go theorem}

\subsection{The no-go theorem in different interpretations of quantum mechanics}
\label{sec:interpretations}

As mentioned above, strategies for coping with the no-go theorem Thm.~\ref{thm:no_joint}, i.e. deciding which of the assumptions one is most likely to drop, will depend on one's interpretation of quantum theory. We believe that organizing interpretations according to which of the assumptions they reject can help to give a clearer understanding of the fundamental differences between them. In what follows we will go through each of the assumptions and for each give examples of a prominent interpretation that would reject it. We do not strive here for exhaustivity, but rather to give an impression of the variety of ways in which our theorem can be understood. In the interest of space, our representation of any interpretation will be rather superficial.

\begin{enumerate}
\item [P1] According to our understanding, the Everett (or \textit{many-worlds}) interepretation~\cite{Everett1957, wallace2012emergent} 
denies that it is meaningful to assign a joint probability $p(f_1, f_2)$ to the friend's observations at multiple times.\footnote{In Chapter 7 of Ref. \cite{wallace2012emergent}, Wallace concludes that the there are only two viable candidates for a correct theory of identity (i.e. for what it means to talk about the "same object" at two different times) within the many-worlds interpretation; he calls these canditates the \textit{Lewisian view} and the \textit{Stage view}. In the Lewisian view, the identity of an object holds over a period of time within a (decohered) history, while in the Stage view the identity of objects only refers to a single instant in time. When "worlds" are allowed to interefere with each other as in the Wigner's friends thought experiment, the Lewisian view appears less viable.} This is at least Bell's diagnostic: 
\begin{quote}
"Everett [...] tries to associate each particular branch at the present time with some particular branch at any past time in a tree-like structure, in such a way that each representative of an observer has actually lived through the particular past that he remembers. In my opinion this attempt does not succeed and is in any case against the spirit of Everett's emphasis on memory contents as the important thing. We have no access to the past, but only to present memories."~\cite{Bell2004}
\end{quote}
Only in situations where a sufficient amount of decoherence is present is it possible to identify "worlds" branching in time, which would allow to meaningfully speak of $p(f_1, f_2)$. By construction, this is not the case in Wigner's friend scenario.

Further, note that operational approaches~\cite{Brukner2017} might only allow for the assignment of probabilities that can in principle be measured by performing many trials of the experiment, or in situations where probability assignments can be related to rational bets. This is not the case for the joint event $(f_1,f_2)$ here, because in a Wigner's friend experiment there is no "reliable record" of $f_1$ that remains available after time $t_2$. 

\item[P2] There are at least two ways that this assumption can be denied: objective collapse of the wave function, or subjective collapse of the wave function. In a theory with objective collapse~\cite{Bassi2013}, not only would P2 be false, but the predictions that Wigner makes using Eq.~\eqref{eq:psi_t2} would be verifiably wrong. In a theory with subjective state assignments such as QBism~\cite{Fuchs2010}, an agent is normatively constrained to use the Born rule for computing probabilities, but the quantum state used for doing so is up to the agent's good judgment; furthermore, QBism prohibits agents from assigning quantum states to themselves~\cite{DeBrota2020}. Therefore there can be subjective collapse: the friend would have the right to use the usual state update rule in order to calculate $p(f_2|f_1)$ -- and thus not recover $p(f_2)$ according to Eq.~\eqref{eq:p_fi} -- while Wigner uses unitary evolution for his own predictions.

\item[P3] In the de Broglie-Bohm interpretation~\cite{Bohm1952_1, Bohm1952_2, durr2009bohmian}, the memory of the friend has a definite and observer-independent value at all times and P1 holds. Furthermore, it can be proven that Bohmian mechanics recovers the same single-time predictions as unitary quantum mechanics so that P2 holds~\cite{Bohm1952_1, Bohm1952_2}. Therefore it must be P3 that fails to hold in that interpretation. Indeed, in is known in the context of double-slit intereference that the Bohmian guidance equation is non-linear in the density operator~\cite{Bell2004_Bohm, Luis2015}. It would be interesting for future work to calculate $p(f_1, f_2)$ for this experiment within a Bohmian description.
\end{enumerate}

\subsection{The (non-)persistence of memory in special cases}
\label{sec:memory}

Theorem~\ref{thm:no_joint} has been derived by assuming that in Wigner's choice of measurement basis, the parameters $a,b$ are arbitrary. However, from Eq.~\eqref{eq:commutator} it is easy to see which particular choices of $a,b$ allow the assumptions P1-P3 to be satisfied. In the following we compute  $p(f_1,f_2)$ for these special cases. We shall see (point 2 below) that assumptions P1-P3 can lead to counterintuitive conclusions about the time-evolution of the friend's memory even in those cases. There are essentially two possibilities which make the commutator in Eq.~\eqref{eq:commutator} vanish, which is equivalent to satisfying all three assumptions.
\begin{enumerate}
\item $|a|=1, b= 0$. (The case $a=0, |b|=1$ differs by a relabeling of the basis states). This corresponds to Wigner performing a measurement of the friend and system in the "computational basis" $|1\rangle = |\uparrow, U\rangle$, $|2\rangle = |\downarrow, D\rangle$\, revealing to him which result the friend observed. In that case the unique probability distribution that satisfies the assumptions of the theorem is  $p(f_1, f_2) = \tr ( E_{f_1}^2 E_{f_2}^2 \rho)$, with
\begin{align}
E_U^1 &= E_U^2 =  \ket \uparrow \bra \uparrow\\
E_D^1 &= E_D^2 =  \ket \downarrow \bra \downarrow.
\end{align}
It is easy to verify that $p(f_2|f_1) = \delta_{f_1 f_2}$, which means that the friends memory of the outcome is perfectly preserved.

\item $|a|^2 = |b|^2 = \frac{1}{2}$. This corresponds to Wigner performing a measurement in the ``Bell basis'' , for example $|1\rangle = \frac{1}{\sqrt{2}} (|\uparrow, U\rangle + |\downarrow, D \rangle)$, $|2\rangle = \frac{1}{\sqrt{2}} (|\uparrow, U\rangle - |\downarrow, D \rangle)$.\footnote{Here again we restrict our analysis to the only two out of the four ``Bell's states'' that are physically relevant in the described scenario.} Eqs.~\eqref{eq:E^2_U} and~\eqref{eq:E^2_D} show that the relative phases do not matter, so it suffices to consider this example. We have in this case $p(f_1, f_2) = \tr(E^1_{f_1} E^2_{f_2} \rho)$, with
\begin{align}
E_U^1 = \ket \uparrow \bra \uparrow\\
E_D^1 = \ket \downarrow \bra \downarrow\\
E^2_U = E^2_D = \frac{\id}{2},
\end{align}
and one can check that $p(f_2 | f_1) = \frac{1}{2}$. This means that the friend's memory gets flipped with probability $\frac{1}{2}$, independently of the initial state $\rho$. This is particularly surprising in the case where the initial state is $|\psi\rangle = \frac{1}{\sqrt{2}}( \ket \uparrow +\ket \downarrow)$, because in that case Wigner performs a \textit{non-disturbance measurement}~\cite{Baumann2019generalized,baumann2020wigner}. This means that the joint state of the friend and system $|\Psi(t_1) \rangle$ is actually an eigenstate of Wigner's measurement. One might expect that in this case, since the quantum state is not changed by Wigner's measurement, the friend's perceived result should remained unchanged as well; this is implicitly assumed in most discussions on the Wigner's friend thought experiment (and explicitly, for example, in Ref.~\cite{baumann2020wigner}). However, this conflicts with the assumption of quantum mechanical linearity of probabilities: if $p(f_1 ,f_2)$ is linear in $\rho$, the friend's perceived outcome must get flipped with probability $\frac{1}{2}$, indepently of $\rho$.

\end{enumerate}

\section{Conclusion}

From the point of view of Wigner, assuming that the friend's memory has a (unknown but definite) value is akin to assuming a hidden-variable model. Bell-type arguments involving two Wigners and two friends~\cite{Brukner2017, Brukner2018, Bong2020} have shown that if we further make a locality assumption on that hidden-variable model, it will not be possible to reproduce the quantum mechanical predictions. In this paper, we have shown that even from the friend's perspective, treating the memory of her measurement outcome as having a value throughout the experiment is in conflict with important features of quantum mechanics. More precisely, we have shown that it is not possible to assign a joint probability to her observed outcomes at two different times of the thought experiment, in a way that is compatible with unitary marginal probabilities and with the linear dependence of quantum mechanical probabilities on quantum states.

How to understand this theorem will depend on one's interpretation of quantum mechanics, but it seems that interesting lessons can be drawn from various interpretational points of view. Many popular interpretations (excluding hidden-variable interpretations like Bohmian mechanics) implicitly satisfy the principle that legitimate probability assignments should depend linearly on the initial quantum state. It appears in light of our theorem that the consequence of such a commitment is that one must in general either prohibit the use of present information to predict the future (drastically scaling down the predictive power of quantum theory), or deny that unitary quantum mechanics makes valid single-time predictions on all scales. That such a radical conclusion is necessary \textit{in general} does not affect the fact that for all practical purposes, i.e. in normal conditions when sufficient amounts of decoherence is present, one can continue to successfully use present information for predictions.

Our results might also raise interesting questions about the persistence of identity for the friend. If it is not possible for the friend to use the Born rule --or any other rule linear in the quantum state of the system-- to assign a joint probability distribution to her observed outcomes before and after Wigner's measurement, then to what extent can the friend at these two different times be considered the same agent? It is conceivable, although counterintuitive, that the friend at $t_1$ and the friend at $t_2$ should be legitimately considered to be two distinct agents.~\footnote{Conceptually speaking, this would be a costly conclusion to make in general, since these "two agents" share many common memories about their past.} In that case one could reach similar conclusions to the ones of Ref.~\cite{Cavalcanti2020}, and say that the friend's outcome at $t_2$ is \textit{not an event} from the point of view of the friend at $t_1$, and vice-versa.\\

\section{Acknowledgements}
We thank Mateus Ara\'{u}jo for comments on a previous version of the manuscript. V.B. acknowledges support from CoQuS, Vienna Doctoral School (VDS) and the QUOPROB project (no. I 2906- G24) of the Austrian Science Fund (FWF). F.D.S. acknowledges the financial support through a DOC Fellowship of the Austrian Academy of Sciences (OAW). We acknowledge support of the Austrian Science Fund (FWF) through the SFB project "BeyondC", a grant from the Foundational Questions Institute (FQXi) Fund and a grant from the John Templeton Foundation (Project No. 61466) as part of the The Quantum Information Structure of Spacetime (QISS) Project (qiss.fr). The opinions expressed in this publication are those of the authors and do not necessarily reflect the views of the John Templeton Foundation. Research at Perimeter Institute is supported in part by the Government of Canada through the Department of Innovation, Science and Economic Development Canada and by the Province of Ontario through the Ministry of Colleges and Universities.

\end{document}